\documentclass[11pt]{article}

\usepackage[margin=1in]{geometry}
\usepackage{amsmath,amssymb,amsthm,bm}
\usepackage{booktabs,tabularx}
\usepackage{tikz}
\usetikzlibrary{arrows.meta,positioning}
\usepackage[hidelinks]{hyperref}
\usepackage[round]{natbib}
\usepackage{microtype}

\theoremstyle{definition}
\newtheorem{definition}{Definition}
\theoremstyle{plain}
\newtheorem{theorem}{Theorem}
\newtheorem{proposition}{Proposition}

\newcommand{\E}{\mathbb E}
\newcommand{\I}{\mathrm I}
\newcommand{\KL}{D_{\mathrm{KL}}}

\title{Task-Sufficient Contraction:\\
Source Selection for Machine Information Interfaces}
\author{Joss Armstrong\thanks{Ericsson Ireland. ORCID: 0009-0009-3462-9679.}}
\date{1 October 2026}

\begin{document}
\maketitle

\begin{abstract}
A declared task can sometimes certify a reduced source before a downstream
encoder, codebook, rate, distortion target, or optimizer is chosen. This
paper studies when one such reduction preserves the complete downstream
problem family, a property termed Task-Sufficient Contraction. The reduced
source is fixed by the task before the later operating point is selected.
An exact contraction allows the later problem to be solved on that source
with the same result as if the full source had been retained.

For a machine with a fixed set of possible actions and a fixed loss, the paper
identifies a consumer-specific source by merging states only when every
available action has the same regret in both. For finite action sets, replacing
the richer source by this reduced source preserves the complete one-step
rate-regret curve, even though the reduction is fixed before the distortion
target is chosen. A second result gives an exact characterization for quadratic
loss on affine feasible-action sets: the canonical reduced source is the
projection onto the directions in which feasible actions can differ. Under a
fixed energy budget, this becomes centered load, while retaining only the
optimal water-filled action is too coarse. Earlier Information Bottleneck,
semantic rate-distortion, and goal-oriented quantization results are then used
to distinguish exact, architecture-conditioned, approximate, failed, and
corrected contractions. The framework suggests a way for heterogeneous
machines to exchange what a receiving task needs without first aligning their
full internal representations.
\end{abstract}

\section{Introduction}

A weather forecast cannot be sufficient for the weather. It can describe the
conditions likely to matter over a region and time interval, but it cannot
settle every physical consequence of what will actually occur. It cannot determine
whether one particular gust, interacting with the exact local mechanics of
one particular building, will dislodge a roof tile. Yet consider a wind turbine
controller making a high-wind cut-out decision. Real turbine systems make
high-wind operating and shutdown decisions from wind-speed thresholds, and
wind forecasts are used operationally to anticipate those cut-outs
\citep{SmithNREL2012,Chou2025}. For that declared decision, the forecast's
prediction of average wind speed over the relevant period is sufficient. The
weather remains richer than the information problem presented to the machine.

What changed is not the weather and not the forecast. What changed is the
question.

Task-aware compression and representation learning already ask which
information matters to a downstream task. The claim here is more specific.
Sometimes the task can identify a smaller source before the later information
problem is optimized. That reduction is not merely the representation chosen
for one rate, distortion target, codebook size, or other operating point. The
same reduced source remains valid across the range of operating points covered
by the result.

The central question is therefore:

\begin{quote}
Can a declared task determine a reduced source before the downstream encoder,
codebook, rate, distortion target, or optimizer is chosen, such that solving
the downstream problem after the reduction gives the same result as solving it
on the richer source?
\end{quote}

In the exact case the requirement is simple to state:
\[
    \boxed{
    \text{reduce first, then optimize}
    =
    \text{optimize on the richer source}.}
\]
Such reductions are called \emph{Task-Sufficient Contractions}.

The distinction becomes more consequential when machines increasingly consume
one another's outputs. In an agentic economy, a model, agent, router,
controller, or service may need to act on another system without access to its
internal representation. A certified task-sufficient interface can let the
receiver act without first reconstructing or aligning the producer's internal
state, provided every distinction capable of changing the receiving decision
is preserved. The value of that possibility grows with the number and
heterogeneity of machine-to-machine handoffs. Opaque-agent protocols already
make this interface boundary an explicit engineering concern
\citep{A2A2026}.

The paper makes three connected contributions. First, it develops a common
certification framework for task-determined source reductions that remain
exact across a declared downstream optimization family, with the reduction
fixed before the later operating point is chosen. Second,
for a fixed consumer, it identifies the consumer-specific regret-complete
source $G_{\mathcal C}$ and proves that, for finite action sets, replacing the
richer source by $G_{\mathcal C}$ leaves the entire one-step rate-regret curve
unchanged. The reduction is therefore fixed before the distortion target is
selected. Third, for quadratic loss on an affine feasible-action set, the paper
characterizes the exact source directly from the directions in which feasible
actions can differ. Under fixed-energy scheduling this source is centered load,
while the optimal water-filled action is too coarse.

The framework grew out of several earlier reductions. An Information Bottleneck
result, \emph{Source-Side Sufficiency for the Information Bottleneck: Exact
Reduction and Finite-Block Equivalence}, was first posted in April 2026
\citep{ArmstrongIB}. There, a sufficient statistic can replace the source
before any point on the information curve is selected. A separately developed
semantic rate-distortion result then showed an exact reduced problem created
by a designed semantic source and its induced loss \citep{ArmstrongSK}. The
companion paper \emph{Task-Relative Information Contracts: Sufficiency,
Evidence, and Hierarchical Preservation in Multi-Agent Systems} calls the
interface that preserves everything $X$ tells a receiver about a named
criterion $Y$ a \emph{task-relative information contract}, or TRI contract
\citep{ArmstrongContract2026}. It also studies how such a contract can be
generated, what level of detail finite evidence can support, and when the same
parent-facing contract survives hierarchical refinement.

The earlier Information Bottleneck, semantic rate-distortion, and
goal-oriented quantization results are revisited as evidence for the framework,
not as three new contributions. Their comparison distinguishes exact
pre-optimization contraction from architecture-conditioned reduction,
approximate reduction, candidate failure, and corrected contraction. In that
sense the retrospective sections explain why several independently developed
reductions belonged to the same reduce-first problem.

The theory is static. It does not claim that a reduction licensed for the
current problem remains valid after acting through the interface changes the
state, learner, candidate set, action menu, criterion, or dependence
structure. It also does not compare task-sufficient sources across a sequence
of changing decision problems or define the direction in which representation
resolution should change when the action-consequence structure moves. Those
recursive questions are deliberately left outside the present paper.

\section{Contraction before optimization}

Task-Sufficient Contraction is not one universal formula.
The recurring structure is:
\[
    \boxed{
    \text{richer source}
    \longrightarrow
    \text{task-certified reduced source}
    \longrightarrow
    \text{later optimization}.}
\]

The reduced source depends on the problem. In the Information Bottleneck it
is a sufficient statistic $\phi(T)$. When every later use of a criterion $Y$
must remain possible, it is the TRI contract $S_Y$. For the
consumer-specific construction below it is $G_{\mathcal C}$. In constrained
quadratic goal-oriented quantization it is the part of the load vector that
can change one feasible action relative to another.

A useful certification pattern has four parts.

\begin{enumerate}
\item The reduction is fixed from the stated task before the later operating
      point is chosen.
\item Every feasible full-source solution has a reduced counterpart that
      preserves the task-relevant quantity and is no worse in the optimized
      resource.
\item Every feasible reduced solution can be pulled back to an admissible
      full-source solution with the same task value.
\item Any additional constraints claimed by the theorem survive the reduction.
\end{enumerate}

The proof takes a different form in each problem. In the IB result, an
encoder is averaged over source states that the relevance variable cannot
distinguish, and a reduced solution can be used again on the full source. The
rate-regret theorem below does the same thing for source states that give every
action the same regret. In semantic rate-distortion, exactness depends on the
chosen architecture and loss. In constrained GOQ, the right reduction is found
by asking how feasible actions can actually differ.

\begin{figure}[t]
\centering
\begin{tikzpicture}[>=Latex, every node/.style={align=center}]

\node[draw, rounded corners, minimum width=2.6cm, minimum height=0.95cm] (r) at (0,0)
{richer\\source};
\node[draw, rounded corners, minimum width=3.2cm, minimum height=0.95cm] (q) at (4.5,0)
{task-certified\\reduced source};
\node[draw, rounded corners, minimum width=2.9cm, minimum height=0.95cm] (o) at (9.3,0)
{later\\optimization};
\draw[->,thick] (r) -- (q);
\draw[->,thick] (q) -- (o);
\node[font=\small] at (2.25,0.76) {fixed before later operating point};

\node[draw, rounded corners, minimum width=1.4cm, minimum height=1.05cm] (x) at (0,-2.65)
{$X$};
\node[draw, rounded corners, minimum width=2.8cm, minimum height=1.05cm, font=\small] (sy) at (3.15,-2.65)
{TRI contract\\$S_Y$};
\node[draw, rounded corners, minimum width=2.8cm, minimum height=1.05cm, font=\small] (g) at (7.25,-2.65)
{consumer-specific\\source\\$G_{\mathcal C}$};
\node[draw, rounded corners, minimum width=2.7cm, minimum height=1.05cm, font=\small] (b) at (11.15,-2.65)
{optimal-action\\endpoint\\$B_{\mathcal C}$};

\draw[->,thick] (x) -- (sy);
\draw[->,thick] (sy) -- (g);
\draw[->,thick] (g) -- (b);
\node[font=\scriptsize] at (5.20,-1.55) {$S_Y \mapsto G_{\mathcal C}$: consumer-specific coarsening};

\end{tikzpicture}
\caption{Top: the general Task-Sufficient Contraction pattern. Bottom: one
important nested case,
\(X \rightarrow S_Y \rightarrow G_{\mathcal C} \rightarrow B_{\mathcal C}\).
Here \(S_Y\) is the TRI contract and preserves everything \(X\) tells the
receiver about the named criterion,
\(G_{\mathcal C}\) preserves the distinctions that can change the relative
cost of the consumer's available actions, and \(B_{\mathcal C}\) keeps only
the optimal full-information action. The lower arrows denote deterministic
factorisation or coarsening, not causal flow. This lower chain is one
application of TSC, not a required sequence for every problem.}
\label{fig:tsc}
\end{figure}

Figure~\ref{fig:tsc} also shows the relationship to the companion
TRI-contract paper. The contract $S_Y$ keeps everything $X$ tells the receiver
about $Y$. A particular consumer may need less. The two reductions can
therefore be nested without making one a required first step for the other.

\section{Consumer-specific task-sufficient sources}

Let $X$ denote the producer-side source and let $Y$ be the criterion used to
judge the receiver's output. A declared consumer specifies the actions it can
take and the loss used to evaluate them. $X$ is taken to have finite support in
the main definitions so that the reduced source is finite. The rate-regret
theorem later also assumes finitely many actions. The continuous GOQ example
does not use that coding theorem and is handled directly.

\begin{definition}[Machine consumer]
A \emph{machine consumer} is a pair
\[
    \mathcal C=(\mathcal A,\ell),
\]
where $\mathcal A$ is the set of admissible outputs or actions and
$\ell:\mathcal A\times\mathcal Y\to\mathbb R$ is the loss by which those
outputs are evaluated.
\end{definition}

The definition does not depend on how the machine is implemented. The
consumer may be a rule set, classifier, optimizer, controller, language model,
or compound agent. What matters here is only what it can output and how those
outputs are judged.

For each source state $x$, define the conditional risk
\begin{equation}
    r_x(a):=\E[\ell(a,Y)\mid X=x],
    \qquad a\in\mathcal A,
    \label{eq:risk}
\end{equation}
and the full-information Bayes envelope
\begin{equation}
    b(x):=\inf_{a\in\mathcal A} r_x(a).
    \label{eq:bayes-envelope}
\end{equation}
Assume $b(x)$ is finite for every source state and define the conditional
regret
\begin{equation}
    g_x(a):=r_x(a)-b(x)\ge0.
    \label{eq:regret-profile}
\end{equation}
The quantity $g_x(a)$ is how much worse action $a$ is than the best action at
state $x$. Because these values depend on the conditional law of $Y$ given
$X$, the reduced source depends on both the source law and the declared
consumer.

Producer-state task losses are included as a special case: for any
$L(x,a)$, take $Y=X$ and $\ell(a,Y)=L(Y,a)$. This is the form used later
in the GOQ scheduling example.

\subsection{Optimal action and the regret-complete source}

When the Bayes-optimal action is unique for every source state, write
\[
    a^*(x):=\arg\min_{a\in\mathcal A}r_x(a).
\]

\begin{definition}[Bayes quotient]
Under unique Bayes actions, define
\[
    x\sim_Bx'
    \quad\Longleftrightarrow\quad
    a^*(x)=a^*(x').
\]
A canonical representative is $B_{\mathcal C}:=a^*(X)$.
\end{definition}

This fixed-loss object is the Bayes quotient of
\citet{Sevetlidis2026} and coincides with policy equivalence in the exact
single-cycle formulation of \citet{WalshRateRegret2026}. It retains what is
needed to implement the full-information optimum.

Knowing only the best action is not always enough to design an imperfect
interface. Compression or quantization can make the receiver choose a
non-optimal action, and the cost of those mistakes can differ even when the
best action is the same.

\begin{definition}[Regret-profile equivalence and consumer-specific source]
Define
\begin{equation}
    x\sim_Gx'
    \quad\Longleftrightarrow\quad
    g_x(a)=g_{x'}(a)
    \quad\text{for every }a\in\mathcal A.
    \label{eq:regret-equivalence}
\end{equation}
Let
\[
    G_{\mathcal C}:=[X]_{\sim_G}
\]
be the resulting equivalence-class label. $G_{\mathcal C}$ is called the consumer-specific regret-complete source. The same object is the regret-profile quotient.
\end{definition}

Two source states receive the same value of $G_{\mathcal C}$ when every
available action is worse than the local optimum by the same amount in both
states. The states may still differ in their absolute baseline loss. What has
been removed is only information that cannot change the relative cost of any
action available to this consumer.

Using regret as a distortion measure is already established in sequential
decision-making and appears explicitly in Walsh's single-cycle formulation
\citep{RussoVanRoy2022,ArumugamVanRoy2021,WalshRateRegret2026}. Here the
whole set of action regrets is used for a different purpose: to decide which
source states still need to be kept separate before the later problem is
solved.

\begin{definition}[Regret-complete statistic]
A deterministic statistic $Q=q(X)$ is \emph{regret-complete} for
$(P,\mathcal C)$ if there exists a function $d_Q$ such that
\[
    g_x(a)=d_Q(q(x),a)
    \qquad\text{for all }x\in\mathcal X,\ a\in\mathcal A.
\]
\end{definition}

\begin{proposition}[Coarsest pointwise regret-complete quotient]
\label{prop:coarsest-regret}
If $Q=q(X)$ is regret-complete, then there exists a deterministic map $h$ with
\[
    G_{\mathcal C}=h(Q).
\]
Thus $G_{\mathcal C}$ is the coarsest deterministic quotient of $X$ that
preserves every action regret pointwise, up to relabelling.
\end{proposition}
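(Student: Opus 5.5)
The plan is to show that the partition of $\mathcal X$ induced by $q$ refines the partition induced by $\sim_G$. Once that holds, the map $h$ can be read off directly. Write $\kappa:\mathcal X\to\mathcal X/\!\sim_G$ for the class map, so that $G_{\mathcal C}=\kappa(X)$.

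The key step is a single implication: if $q(x)=q(x')$, then $x\sim_G x'$. This follows immediately from regret-completeness. For every $a\in\mathcal A$,
\[
g_x(a)=d_Q(q(x),a)=d_Q(q(x'),a)=g_{x'}(a),
\]
which is exactly \eqref{eq:regret-equivalence}.

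Next, define $h$ on the image $q(\mathcal X)$ by $h(u):=\kappa(x)$ for any $x$ with $q(x)=u$. The implication above shows that $h$ is well defined, since the choice of preimage does not matter. Off the image, $h$ can be set to an arbitrary fixed value. By construction $h(q(x))=\kappa(x)$ for every $x$, so $G_{\mathcal C}=h(Q)$ holds pointwise, not merely almost surely. Because $X$ has finite support, no measurability issue arises.

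For the ``coarsest'' clause, two facts are needed. First, $G_{\mathcal C}$ is itself regret-complete: take $d_{G}([x],a):=g_x(a)$, which is well defined by the definition of $\sim_G$. Second, every regret-complete $Q$ factors through it, as just shown. Together these say $G_{\mathcal C}$ is minimal in the refinement order among regret-complete deterministic quotients. Uniqueness up to relabelling then follows. If $Q'$ is another coarsest such quotient, it also factors through $G_{\mathcal C}$ by the same argument applied in reverse. Mutual factorization through deterministic maps forces the two induced partitions to coincide, so $Q'$ and $G_{\mathcal C}$ differ only by a bijection of labels.

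I do not expect a real obstacle here. The only point needing care is to state that ``coarsest'' refers to partitions of $\mathcal X$, and that the identity holds for all $x\in\mathcal X$ rather than only on the support of $P$. If states outside the support are allowed, the regret $g_x$ should be understood as defined for every $x$ in $\mathcal X$, which is how the definition quantifies over them.
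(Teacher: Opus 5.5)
Your proof is correct and is essentially the paper's argument: regret-completeness places each fibre of $Q$ inside a single $\sim_G$ class, so the class label is a deterministic function of $Q$. You go slightly further than the paper by checking that $G_{\mathcal C}$ is itself regret-complete and unique up to relabelling, which the paper leaves implicit. One wording slip: in the standard convention it is $G_{\mathcal C}$ that factors through $Q$, not $Q$ through $G_{\mathcal C}$.
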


This minimality is pointwise. It says that any deterministic summary that
preserves the cost of every action must also distinguish every value of
$G_{\mathcal C}$. It does not claim that $G_{\mathcal C}$ is the smallest
summary that happens to preserve one particular rate-regret curve or one
chosen distortion level.

\begin{proof}
If $q(x)=q(x')$, regret-completeness gives
$g_x(a)=d_Q(q(x),a)=d_Q(q(x'),a)=g_{x'}(a)$ for every $a$. Hence $x\sim_Gx'$,
so each fibre of $Q$ lies inside one regret-equivalence class. The class label
is therefore a deterministic function of $Q$.
\end{proof}

\begin{proposition}[Regret quotient refines the Bayes quotient]
\label{prop:regret-refines-bayes}
Under unique Bayes actions, there exists a deterministic map $k$ such that
\[
    B_{\mathcal C}=k(G_{\mathcal C}).
\]
\end{proposition}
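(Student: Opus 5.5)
The approach is to show that the Bayes action $a^*(x)$ is determined by the regret profile $g_x(\cdot)$, and then to argue exactly as in Proposition~\ref{prop:coarsest-regret}. The key observation is that, under unique Bayes actions, $a^*(x)$ is the unique zero of the regret profile. Since $g_x(a)=r_x(a)-b(x)$ and $b(x)=\inf_a r_x(a)$ is finite, we have $g_x(a)=0$ exactly when $r_x(a)=b(x)$. Uniqueness of the Bayes action means the infimum is attained, and only at $a^*(x)$. Hence
\[
    \{a\in\mathcal A:\ g_x(a)=0\}=\{a^*(x)\}.
\]
This is the step I expect to need care. The set $\arg\min_a r_x(a)$ is only meaningful as a single action if the infimum is attained. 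The standing assumption of unique Bayes actions is exactly what guarantees this, so I will state explicitly that I read ``unique Bayes action'' as ``the minimizer exists and is unique''. Then $a^*(x)=\arg\min_a g_x(a)$, so the Bayes action is a function of the profile alone, with no reference to the baseline $b(x)$.

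Next I will show that $\sim_G$ refines $\sim_B$. Suppose $x\sim_G x'$. Then $g_x(a)=g_{x'}(a)$ for every $a\in\mathcal A$, so the two profiles have the same zero set. By the previous step, this gives $a^*(x)=a^*(x')$, that is, $x\sim_B x'$. So every regret-equivalence class lies inside a single Bayes class.

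Finally I will define $k$ on the values of $G_{\mathcal C}$. For a class $c=[x]_{\sim_G}$, set $k(c):=a^*(x)$. This is well defined by the refinement just shown, since any two representatives of $c$ have the same Bayes action. It follows that $B_{\mathcal C}=a^*(X)=k([X]_{\sim_G})=k(G_{\mathcal C})$. Equivalently, $k(c)$ can be written as the unique zero of the common profile on $c$, which gives $k$ explicitly rather than through a choice of representative.

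Beyond the attainment point, there is no real obstacle. The result is a direct corollary of the fact that regret profiles determine their own minimizers. It can also be phrased as saying that $B_{\mathcal C}$ is itself a deterministic statistic of the profile, fitting the chain $G_{\mathcal C}\to B_{\mathcal C}$ in Figure~\ref{fig:tsc}.
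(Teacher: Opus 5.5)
Your proof is correct and follows the paper's argument. The regret profile is identical across each $\sim_G$ class, its unique zero is the Bayes action, and $k$ sends the class to that zero; your remark that ``unique Bayes action'' includes attainment of the infimum only spells out what the paper's definition of $a^*(x)$ already assumes.
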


\begin{proof}
Within a regret-equivalence class the complete regret function is identical.
Its unique zero is therefore identical, and that zero is the unique
Bayes-optimal action.
\end{proof}

In plain terms, $B_{\mathcal C}$ records what the consumer should do with
full information. $G_{\mathcal C}$ also records what each possible mistake
would cost.

Recent work on action-sufficient goal representations shows that information
sufficient to recover the optimal value need not recover the optimal action
distribution \citep{Hyeon2026}. For state $S$, goal $G$ and representation
$Z=\phi(S,G)$, their action-sufficiency condition is $\I(A;G\mid S,Z)=0$,
where $A$ follows a fixed optimal policy. That condition concerns optimal
action prediction. The present construction asks a stronger preservation
question for imperfect interfaces: which source distinctions preserve the
relative cost of every feasible action? Under unique Bayes actions,
$B_{\mathcal C}$ retains the optimal-action endpoint, whereas the
consumer-specific regret-complete source $G_{\mathcal C}$ retains the
complete feasible-action regret profile. Under the finite-source and
finite-action assumptions of Theorem~\ref{thm:contraction}, the latter
preserves the entire one-step rate-regret curve with the reduction fixed
before the distortion target is selected. Two states can therefore share the
same optimal action while remaining distinct because the same nonoptimal
action incurs different regrets.

\begin{proposition}[When the Bayes quotient is regret-complete]
\label{prop:bayes-regret-coincide}
Under unique Bayes actions, the Bayes quotient is regret-complete if and only
if there exists a function
\[
    d:\mathcal A\times\mathcal A\to\mathbb R_{\ge0}
\]
such that
\begin{equation}
    g_x(a)=d(a^*(x),a)
    \qquad\text{for every }x\text{ and }a.
    \label{eq:action-factorization}
\end{equation}
\end{proposition}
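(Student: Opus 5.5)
The plan is to prove the two directions separately, working directly from the definition of a regret-complete statistic applied to $Q=B_{\mathcal C}=a^*(X)$. Regret-completeness of $B_{\mathcal C}$ means that some function $d_B$ satisfies $g_x(a)=d_B(a^*(x),a)$ for all $x$ and $a$. The statement is then essentially a renaming of this condition. The only care needed is over the domain and codomain of the function, which should be $\mathcal A\times\mathcal A\to\mathbb R_{\ge0}$.

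For the ``if'' direction, I assume \eqref{eq:action-factorization} holds. I then take $q=a^*$ and $d_Q:=d$, so that $g_x(a)=d(q(x),a)$ for every $x$ and $a$. This is exactly the definition of regret-completeness for the statistic $B_{\mathcal C}=a^*(X)$, so nothing further is required.

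For the ``only if'' direction, I assume $B_{\mathcal C}$ is regret-complete with witness $d_Q$. A priori $d_Q$ is defined only on pairs $(b,a)$ with $b$ in the range $a^*(\mathcal X)$, and it is real-valued.
\begin{itemize}
\item On that range I set $d(b,a):=d_Q(b,a)$. Each such $b$ equals $a^*(x)$ for some $x$, so $d(b,a)=g_x(a)\ge 0$ by \eqref{eq:regret-profile}.
\item For $b\notin a^*(\mathcal X)$, I set $d(b,a):=0$, or any other nonnegative value. These values are never evaluated at $b=a^*(x)$, so they do not affect the identity.
\end{itemize}
This extension gives $d:\mathcal A\times\mathcal A\to\mathbb R_{\ge0}$ satisfying \eqref{eq:action-factorization}. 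It is also worth remarking that uniqueness of the Bayes action makes $a^*$ well defined, and forces $d(b,b)=0$ on the range, consistent with Proposition~\ref{prop:regret-refines-bayes}. Neither fact is needed for the equivalence itself.

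I do not expect a real obstacle here. The only delicate point is this domain and nonnegativity bookkeeping in the ``only if'' direction: regret-completeness supplies $d_Q$ only on realized Bayes actions, whereas the proposition asks for a function on all of $\mathcal A\times\mathcal A$. The extension above handles this.
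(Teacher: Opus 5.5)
Your proposal is correct and matches the paper's argument: both directions unpack the definition of regret-completeness for the statistic $a^*(X)$, with $d$ taken to be the common regret value on each Bayes class. Your explicit extension of $d$ off the range $a^*(\mathcal X)$ and your nonnegativity check are bookkeeping that the paper's two-line proof leaves implicit.
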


\begin{proof}
If \eqref{eq:action-factorization} holds, every regret profile is determined by
$a^*(x)$. Conversely, if the regret profile is determined by the Bayes
quotient, define $d(a^*,a)$ to be the common regret value on that Bayes class.
\end{proof}

Equation~\eqref{eq:action-factorization} is exactly the condition under
which sending only the optimal action also preserves the cost of every possible
mistake.

\subsection{Simple examples}

\paragraph{Squared error.}
Let $Y\in\mathbb R$ and $\ell(a,Y)=(Y-a)^2$. Writing
$m(x)=\E[Y\mid X=x]$,
\[
    r_x(a)=\operatorname{Var}(Y\mid x)+(a-m(x))^2,
\]
so $g_x(a)=(a-m(x))^2$. Bayes action and regret profile are both generated by
$m(X)$.

\paragraph{Log loss.}
For finite $Y$, let the action be a predictive distribution $q$ and
$\ell(q,y)=-\log q(y)$. If $p_x=P(Y\mid X=x)$, then the Bayes action is $p_x$
and
\[
    g_x(q)=\KL(p_x\Vert q).
\]
Again the Bayes and regret quotients coincide with posterior equivalence.

\paragraph{Zero--one classification.}
Let $Y\in\{1,\ldots,K\}$ and actions be class labels. Then
\[
    g_x(a)=p_{\max}(x)-P(Y=a\mid x),
    \qquad
    p_{\max}(x)=\max_jP(Y=j\mid x).
\]
The Bayes quotient keeps only the maximizing class, but the complete regret
vector determines the whole posterior. Indeed,
\[
    \sum_a g_x(a)=Kp_{\max}(x)-1,
\]
so
\[
    p_{\max}(x)=\frac{1+\sum_a g_x(a)}{K},
    \qquad
    P(Y=a\mid x)=p_{\max}(x)-g_x(a).
\]
Thus two inputs with the same Bayes label can be distinct consumer-specific source
states even under ordinary classification loss.

\subsection{Relation to the full posterior about the criterion}

Let
\[
    S_Y(X):=P(Y\in\cdot\mid X)
\]
denote the posterior of $Y$ given $X$. The companion TRI-contract paper
\citep{ArmstrongContract2026} uses $S_Y$ for the interface that preserves
everything $X$ tells the receiver about $Y$. Since every action cost depends on
$X$ through this conditional distribution, every action-regret profile is also
a function of $S_Y$.

\begin{proposition}[Posterior, regret and Bayes hierarchy]
\label{prop:posterior-regret-bayes}
There exist deterministic maps $h_{\mathcal C}$ and $k_{\mathcal C}$ such that
\[
    S_Y \longrightarrow G_{\mathcal C}\longrightarrow B_{\mathcal C}
\]
under unique Bayes actions, where the arrows denote deterministic
factorisation.
\end{proposition}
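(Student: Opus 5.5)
The plan is to build the two maps separately and then compose them, reusing Propositions~\ref{prop:coarsest-regret} and~\ref{prop:regret-refines-bayes} so that almost nothing new needs to be shown.

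\textbf{Step 1: the map $h_{\mathcal C}$ from $S_Y$ to $G_{\mathcal C}$.} I will show that $Q:=S_Y(X)$ is a regret-complete statistic and then apply Proposition~\ref{prop:coarsest-regret}. For a posterior $\pi$ in the range of $S_Y$ and an action $a\in\mathcal A$, define
\[
    R(\pi,a):=\int\ell(a,y)\,\pi(dy).
\]
Then $r_x(a)=R(S_Y(x),a)$ by the definition of conditional expectation. Because $X$ has finite support, this identity holds for every state of positive probability, and those are the only states that carry the source. Taking the infimum over $a$ gives $b(x)=\inf_{a}R(S_Y(x),a)$, which is also a function of $S_Y(x)$ alone. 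So I set
\[
    d_Q(\pi,a):=R(\pi,a)-\inf_{a'}R(\pi,a'),
\]
and this gives $g_x(a)=d_Q(S_Y(x),a)$ for all $x$ and $a$. Proposition~\ref{prop:coarsest-regret} then supplies $h_{\mathcal C}$ with $G_{\mathcal C}=h_{\mathcal C}(S_Y)$. The same conclusion can be checked directly: if $S_Y(x)=S_Y(x')$, the two regret profiles coincide, so $x\sim_G x'$.

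\textbf{Step 2: the map $k_{\mathcal C}$ from $G_{\mathcal C}$ to $B_{\mathcal C}$.} Under unique Bayes actions, Proposition~\ref{prop:regret-refines-bayes} gives $k_{\mathcal C}$ with $B_{\mathcal C}=k_{\mathcal C}(G_{\mathcal C})$. Concretely, $k_{\mathcal C}$ sends a regret class to the unique zero of its common regret profile. Composing the two maps gives $B_{\mathcal C}=k_{\mathcal C}\circ h_{\mathcal C}(S_Y)$, which is the chain stated in the proposition.

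\textbf{Main obstacle.} The only delicate point is well-definedness in Step~1. One has to make sure that $r_x(a)$ depends on $x$ only through the posterior, that is, that the conditional law is evaluated as the version attached to $x$ itself. With finite support this is automatic, because conditioning on each atom $\{X=x\}$ is elementary. Integrability also needs care, since $R(\pi,a)$ must be finite; this follows from the standing assumption that $b(x)$, and hence each $r_x$, is finite. I do not expect any measurability issue, because everything lives on finitely many atoms.
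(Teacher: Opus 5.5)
Your proposal is correct and follows essentially the paper's route: the paper justifies the first arrow by noting that every action cost, and hence every regret profile, depends on $X$ only through the posterior $S_Y$. You formalize this as regret-completeness of $S_Y$ plus Proposition~\ref{prop:coarsest-regret}, and the second arrow is Proposition~\ref{prop:regret-refines-bayes} in both cases. One minor slip: finiteness of $b(x)$ does not by itself make every $r_x(a)$ finite, but your identity $r_x(a)=R(S_Y(x),a)$ holds wherever $r_x(a)$ is defined (possibly $+\infty$), so the factorization does not need that finiteness.
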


When the criterion and consumer are considered together, the three objects
have a simple ordering. $S_Y$ keeps everything $X$ tells the receiver about $Y$.
$G_{\mathcal C}$ keeps only what can change the relative cost of an action
available to this consumer. $B_{\mathcal C}$ keeps only what is needed to
identify the best action. This ordering connects the two papers. It does not
mean that Task-Sufficient Contraction must always begin by constructing
$S_Y$.

\subsection{Loss decomposition}

Suppose an information interface produces a representation $U$ from $X$, and
a receiving rule produces $A$ from $U$, so $Y-X-U-A$ is a Markov chain. Then
\begin{equation}
    \E[\ell(A,Y)]
    =
    \E[b(X)]
    +
    \E[g_X(A)].
    \label{eq:loss-decomposition}
\end{equation}
The first term is the loss that remains even with full information. The
interface can affect only the second term, which measures the extra loss caused
by choosing a non-optimal action. The next question is whether evaluating that
extra loss really requires the full source $X$.

\section{Task-Sufficient Contraction}

Assume in this section that the action set $\mathcal A$ is finite.
Following the established rate-regret formulation, define
\begin{equation}
    R_X(D)
    :=
    \inf_{\substack{p(a\mid x):\\
                    \E[g_X(A)]\le D}}
    I(X,A).
    \label{eq:raw-rd}
\end{equation}
By data processing, an intermediate representation can be replaced by the
distribution over actions that it induces without increasing mutual
information, as in the single-cycle formulation of
\citet{WalshRateRegret2026}. The source-side result in \citet{ArmstrongIB}
makes the complementary move: it averages an encoder over source states that a
sufficient statistic already treats as identical. The theorem below asks the
same reduce-first question for the consumer-specific source. The consumer is
fixed before the distortion target $D$ is chosen. The question is whether the
same reduced source works for the whole rate-regret curve.

Because $g_x(\cdot)$ is constant inside each $G_{\mathcal C}$ fibre, define
\begin{equation}
    d_G(g,a):=g_x(a)
    \qquad\text{for any }x\text{ with }G_{\mathcal C}(x)=g,
    \label{eq:regret-distortion}
\end{equation}
which is well defined, and let
\begin{equation}
    R_G(D)
    :=
    \inf_{\substack{p(a\mid g):\\
                    \E[d_G(G_{\mathcal C},A)]\le D}}
    I(G_{\mathcal C},A).
    \label{eq:regret-rd}
\end{equation}

\begin{theorem}[Rate-regret invariance under consumer-specific contraction]
\label{thm:contraction}
For every achievable $D$,
\begin{equation}
    \boxed{R_X(D)=R_G(D).}
    \label{eq:contraction}
\end{equation}
\end{theorem}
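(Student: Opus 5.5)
I will prove the two inequalities $R_G(D)\le R_X(D)$ and $R_X(D)\le R_G(D)$ separately. Both use only that $G:=G_{\mathcal C}=\gamma(X)$ is a deterministic function of $X$ and that $g_x(a)=d_G(\gamma(x),a)$ by \eqref{eq:regret-distortion}. This means the distortion of any joint law of $(X,A)$ depends only on the induced joint law of $(G,A)$. The two directions follow the certification pattern of Section~2: a reduction map that does not increase rate, and a pullback that preserves distortion.

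\emph{Pullback, $R_X\le R_G$.} Take any test channel $p(a\mid g)$ that is feasible for \eqref{eq:regret-rd} at level $D$, and define $p(a\mid x):=p(a\mid \gamma(x))$. Then $A$ depends on $X$ only through $G$, so $X-G-A$ is a Markov chain. The joint law of $(G,A)$ is unchanged, so
\[
\E[g_X(A)]=\E[d_G(G,A)]\le D .
\]
Since $G$ is a function of $X$, we have $I(X;A)=I(X,G;A)=I(G;A)+I(X;A\mid G)=I(G;A)$. Taking the infimum over feasible $p(a\mid g)$ gives $R_X(D)\le R_G(D)$.

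\emph{Reduction, $R_G\le R_X$.} Take any $p(a\mid x)$ feasible for \eqref{eq:raw-rd}, and average it over each fibre of $G$:
\[
\bar p(a\mid g):=\sum_{x:\gamma(x)=g}P(X=x\mid G=g)\,p(a\mid x).
\]
This is precisely the induced conditional law of $A$ given $G$ under the original joint law. Hence the pair $(G,A)$ has the same joint distribution under $\bar p$ as under $p$. Its distortion is therefore $\E[d_G(G,A)]=\E[g_X(A)]\le D$, so $\bar p$ is feasible. For the rate, data processing along $G-X-A$ (which holds because $G$ is a function of $X$) gives $I(G;A)\le I(X;A)$. Taking infima yields $R_G(D)\le R_X(D)$.

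The main care point is bookkeeping rather than depth. Only source states with $P(X=x)>0$ matter, so I would restrict $\mathcal X$ to the support to keep $\bar p$ well defined. I would also check that ``achievable $D$'' means the same thing for both problems. The two maps above send feasible sets onto each other, so one problem is feasible at level $D$ exactly when the other is. Finiteness of $\mathcal A$ and $\mathcal X$ guarantees that the mutual informations are finite and the infima are well posed. No convexity or attainment argument is needed, because the inequalities hold channel by channel before taking infima.
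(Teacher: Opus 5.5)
Your proposal is correct and follows the paper's proof essentially step for step. For $R_G\le R_X$ you fibre-average the channel so that the joint law of $(G_{\mathcal C},A)$, and hence the expected regret, is preserved, and then apply data processing. For $R_X\le R_G$ you pull $p(a\mid g)$ back through $G_{\mathcal C}(x)$ so that $I(X;A)=I(G_{\mathcal C};A)$, with your added bookkeeping on supports and on common feasibility at level $D$ spelling out what the paper leaves implicit.
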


\begin{proof}
Take any feasible action channel $p(a\mid x)$ and average it within each
regret-equivalence fibre:
\[
    \bar p(a\mid g):=P(A=a\mid G_{\mathcal C}=g).
\]
The averaged channel preserves the joint law of $(G_{\mathcal C},A)$ and hence
preserves expected regret, while data processing gives
\[
    I(G_{\mathcal C},A)\le I(X,A).
\]
Thus $R_G(D)\le R_X(D)$. Conversely, any $G_{\mathcal C}$-channel pulls back to
$X$ by $p(a\mid x):=p(a\mid G_{\mathcal C}(x))$. Then
$A-G_{\mathcal C}-X$ and $G_{\mathcal C}$ is a function of $X$, so
$I(X,A)=I(G_{\mathcal C},A)$ and the expected regret is identical. Hence
$R_X(D)\le R_G(D)$.
\end{proof}

The averaging step is classical sufficiency and rate-distortion machinery
\citep{Bahadur1954,Witsenhausen1980}. What matters here is when it is applied.
The consumer-specific source is fixed before $D$ is selected, yet no point on
the one-step rate-regret curve improves by returning to the richer source.

Under unique Bayes actions, the zero-regret endpoint collapses further to the
Bayes quotient. In the finite setting,
\[
    R_X(0)=H(B_{\mathcal C}),
\]
a result already explicit in the policy-equivalence rate-regret formulation of
\citet{WalshRateRegret2026}. The Bayes quotient therefore describes the zero-regret endpoint. Preserving
the regret of every feasible action can require distinctions absent from
the Bayes quotient. The source $G_{\mathcal C}$ preserves these regrets
pointwise and is sufficient for the entire one-step rate-regret curve.
This does not assert that it is minimal among all statistics that happen
to preserve that optimized curve.

Theorem~\ref{thm:contraction} is stated for finite $\mathcal A$. The
continuous-action quadratic results below do not use this coding theorem.
Instead they compare the full regret functions directly.

\subsection{Five contraction regimes}

In practice, a designer may propose a simpler task variable $Q=q(X)$ before
knowing whether it really can replace the richer source. Five cases should be
kept separate.

\paragraph{Exact consumer-complete contraction.}
If $Q$ is regret-complete, equivalently if
\[
    g_X(a)=d_Q(Q,a)\qquad\forall a,
\]
then Proposition~\ref{prop:coarsest-regret} gives
$G_{\mathcal C}=h(Q)$. The problem can contract at least as far as
$G_{\mathcal C}$. If $Q$ is merely a relabelling or refinement of that quotient,
solving on $Q$ is exact but not minimal.

\paragraph{Architecture-conditioned contraction.}
For an arbitrary deterministic $Q=q(X)$, define the induced distortion
\begin{equation}
    d_Q(q,a):=\E[g_X(a)\mid Q=q].
    \label{eq:induced-distortion}
\end{equation}
If the architecture deliberately restricts the downstream action channel to
factor through $Q$, so $X-Q-A$ is Markov, then
\[
    \E[g_X(A)]=\E[d_Q(Q,A)].
\]
This includes stochastic encoders and intermediate representations generated
from $Q$. The construction is classical indirect distortion
\citep{Witsenhausen1980}. It gives the correct loss for a system that has
chosen to work only through $Q$. It does not show that the consumer itself
would be unable to use information that $Q$ discarded.

\paragraph{Surrogate contraction.}
A proposed $Q$ may instead support a simpler approximate loss
$\widetilde d(Q,a)$ satisfying
\begin{equation}
    \alpha\,\widetilde d(Q,a)
    \le g_X(a)
    \le \beta\,\widetilde d(Q,a)
    \label{eq:surrogate-sandwich}
\end{equation}
in a stated regime. Working on $Q$ then comes with a bound, but it is not an
exact reduction.

\paragraph{Candidate reduction failure.}
A proposed task variable may merge states that still assign different costs to
some admissible action. The variable is then too coarse for exact contraction.

\paragraph{Corrected contraction.}
Failure of the obvious task variable need not mean that the full source is
irreducible. A different reduced source may still preserve the downstream
problem exactly. The constrained GOQ example below is the main instance.

These cases answer different questions. Does the reduced source preserve the
whole later problem being claimed? Is it exact only because the architecture
forces the system to use that source? Does it provide only an approximation?
Or was the wrong reduced source chosen in the first place?

\section{Neighbouring theories and attribution}

Task-Sufficient Contraction sits beside several established ideas rather than
replacing them. Classical sufficiency and Blackwell comparison ask when a
reduced observation preserves decision value
\citep{Blackwell1953,Bahadur1954}. Indirect rate-distortion gives the correct
loss when the encoder observes a variable different from the one on which
performance is judged \citep{Witsenhausen1980}. Functional compression asks
which source distinctions must survive when the receiver needs a function of
the source rather than a reconstruction of the source itself
\citep{OrlitskyRoche2001}.

\citet{Bruhacs2026TaskSufficient} formulates task-relative preservation under
projection and the least quotient sufficient for a declared task family.
Those results provide a close formal comparison for task-relative reduction.
The present paper develops a certification framework for stated downstream
optimization families, together with the consumer-specific regret
construction and affine-action quadratic characterization. The general
idea that a task determines which distinctions must survive is not claimed
as new.

Recent work develops consumer-specific reductions. \citet{Sevetlidis2026}
studies Bayes-sufficient representations for a fixed loss.
Sevetlidis also studies loss shift, in which changing the loss can make a
previously sufficient representation inadequate even when the joint data
distribution remains fixed \citep{Sevetlidis2026LossShift}. This concerns
transfer between declared losses. The rate-regret result here fixes the
consumer and asks whether one source reduction remains exact across the
stated distortion targets.
\citet{RussoVanRoy2022} and \citet{ArumugamVanRoy2021} connect information
acquisition and near-optimality through rate-distortion ideas. Most directly,
\citet{WalshRateRegret2026} uses policy regret as distortion and identifies
policy equivalence as the exact zero-regret endpoint of a single-cycle
rate-regret problem. These results establish important pieces of the
decision-aware picture.

\citet[][Appendix~I]{Hyeon2026} also study Q-sufficiency, which preserves
the optimal Q-function, and derive action sufficiency under a specified
tie-breaking rule. The regret-complete source here preserves relative action
costs while discarding action-independent loss baselines.

\citet{Wei2026WorldModel} distinguish mechanism, response, and decision
sufficiency for world-model planning. For a fixed query and candidate set,
their response object retains absolute candidate costs, while the decision
object retains a selected optimum. With the candidate set and cost
criterion matched, the absolute cost profile determines the regret profile
used here, which in turn determines the optimum under this paper's
unique-optimum assumption. Their planner analysis also identifies
normalized relative-cost weights as sufficient for MPPI updates and
explicitly notes common-offset invariance. Thus relative-cost preservation
is not absent from their analysis. Their work distinguishes the information
required by particular planner operations, including search and final
selection. Theorem~\ref{thm:contraction} here
certifies preservation of the complete one-step rate-regret curve by one
fixed regret-complete source.

The common question developed here is whether a task-certified reduction
can be fixed before the later rate, codebook, distortion target, or other
operating choice, while preserving the stated downstream problem family.
The framework distinguishes exact consumer-complete contraction,
architecture-conditioned contraction, bounded approximation, candidate
failure, and corrected contraction. The constructions in this paper show
how those different guarantees can be established and why they should not
be conflated.

Related equivalence ideas appear elsewhere. Myhill--Nerode minimisation merges
automaton histories that no permitted continuation can distinguish with
respect to language acceptance \citep{Nerode1958}. Observational equivalence
separates internal identity from external behaviour in process theory
\citep{HennessyMilner1985}. Bisimulation-based state abstraction preserves
controlled behaviour over transitions \citep{Ferns2004}. These are useful
analogies. Their dynamic closure requirements are not part of the static
theory developed here.

\section{Three earlier reductions}

The three reductions in this section are earlier results, not three new
contributions of the present paper. Their comparison exposed the common
reduce-first pattern. Each case is therefore presented in two parts: what the
earlier paper already established, and what the comparison adds here.

\subsection{Information Bottleneck: exact pre-optimization contraction}

\paragraph{Native result.}
Let \(T\) be the source and \(C\) the relevance variable. Consider the
probabilistic prediction consumer with action space \(\Delta(\mathcal C)\) and
logarithmic loss. For
\[
    p_t=P(C\mid T=t),
\]
the Bayes action is \(p_t\), and the regret of predicting \(q\) is
\[
    g_t(q)=\KL(p_t\Vert q).
\]
Under log loss, two source states are therefore equivalent exactly when they
induce the same posterior over $C$.

The existing source-side sufficiency theorem assumes a deterministic
\(\phi(T)\) satisfying
\[
    C\perp T\mid\phi(T).
    \label{eq:ib-suff}
\]
Thus the conditional distribution of $C$ is determined by $\phi(T)$, so
$\phi(T)$ contains everything needed to compute the log-loss regret of any
prediction.

The companion IB result proves that every full encoder has a fibre-averaged
shadow on \(\phi(T)\) with the same relevance and no greater source rate, and
that
\begin{equation}
    I(U,T)
    =
    I(U,\phi(T))
    +
    I(U,T\mid\phi(T)).
    \label{eq:ib-decomp}
\end{equation}
The complete IB curve, the Lagrangian infima, the minimisers and the
finite-block log-loss problem are therefore unchanged by solving first on
\(\phi(T)\) \citep{ArmstrongIB}.

In the present terminology, IB is the clean exact case:
\[
    \boxed{
    \text{criterion-certified quotient}
    \longrightarrow
    \text{exact optimization contraction}.
    }
\]

\paragraph{TSC reading.}
The native theorem says more than that a good compressed representation
exists. It says that the IB optimization never needed all distinctions in
$T$ in the first place. This makes IB the cleanest exact example of contraction before a later operating point is chosen. The theorem itself remains the result of the earlier
paper.

\subsection{Semantic rate-distortion: architecture-conditioned contraction}

\paragraph{Native result.}
The semantic/source-coding example starts from a producer state \(T\) and a
designer-chosen deterministic oracle allocation
\[
    Q=\phi^*(T).
\]
The underlying consumer is evaluated by welfare loss \(\ell(t,a)\).

Under general smooth concave utility, knowing the deterministic oracle
$\phi^*(T)$ is not enough by itself to recover the cost of every other action.
The semantic rate-distortion paper therefore defines the reduced loss as
\begin{equation}
    d_s^*(q,a)
    :=
    \E[\ell(T,a)\mid\phi^*(T)=q].
    \label{eq:sk-induced}
\end{equation}
Any reconstruction rule on the full source can then be averaged over states
with the same $Q=\phi^*(T)$. This leaves the reduced loss unchanged and cannot
increase mutual information. Conversely, any rule defined on $Q$ can be used
on $T$ through the map $T\mapsto Q$. The semantic marginal rate-distortion
problem can therefore be solved exactly on $\phi^*(T)$ with the induced loss
$d_s^*$ \citep{Witsenhausen1980,ArmstrongSK}.

Here $d_s^*$ is the expected task loss after conditioning on the reduced
source, matching the semantic rate-distortion formulation. Written in the
regret notation used here, the corresponding loss is
\[
    d_Q^{\rm reg}(q,a)
    =d_s^*(q,a)-\E[b(T)\mid \phi^*(T)=q].
\]
The subtracted term does not depend on the chosen action once $q$ is fixed.
It therefore changes the baseline loss but not which action is better. The
raw-loss and regret versions describe the same reduced decision problem.

For $\phi^*(T)$ itself to preserve the cost of every possible mistake,
additional structure is needed. It holds, for example, when
\[
    \ell(t,a)-\ell(t,\phi^*(t))
\]
is a function only of \((\phi^*(t),a)\), including the exact quadratic
subclasses considered in the companion work.

The same example exposes a second issue. One interface may have to support
both allocation and verification. If verification can be expressed as an
ordinary per-item distortion, the two requirements still fit the same
semantic rate-distortion problem. If verification depends on the occupancy of
the whole pool, it no longer has that form, and the problem becomes a
constrained feasibility problem instead \citep{ArmstrongSK}.

Thus the semantic example separates two statements:
\[
    \boxed{\text{a quotient can define an exact restricted problem}}
\]
without implying
\[
    \boxed{\text{the chosen variable is the smallest exact source for the
    unrestricted consumer}.}
\]

\paragraph{What the comparison adds.}
Comparing this case with IB exposes an important distinction. Information can
be absent because the consumer has no use for it, or because the system design
has chosen not to expose it. Those are not the same claim. The additional
verification requirement shows the same point from another direction: a
reduction that is sufficient for one declared requirement need not be
sufficient for another.

\subsection{Goal-oriented quantization: failure of the obvious source and exact correction}

\paragraph{Native result.}
The current GOQ manuscript, \emph{Task-Oriented Quantization for Quadratic
Scheduling: Centroid Water-Filling and Power-Diagram Encoders}, supplies the
constrained scheduling case \citep{ArmstrongGOQ}. Let $T$ be the producer
state and $\phi^*(T)$ the
deterministic full-information optimum.

\paragraph{Exact interior case.}
For isotropic quadratic task loss,
\[
    \ell(t,a)=\frac{\gamma}{2}\|a-\phi^*(t)\|^2,
\]
the entire regret profile factors through $\phi^*(t)$. The Bayes and regret
quotients therefore coincide with the oracle action, and the task RDF and
finite-$K$ quantization problem reduce exactly to ordinary MSE coding on
$\phi^*(T)$.

\paragraph{Smooth interior case.}
Under a zero-gradient interior optimum and smooth/strong curvature,
\[
    \frac{\alpha}{2}\|\phi^*(t)-a\|^2
    \le
    \ell(t,a)-\ell(t,\phi^*(t))
    \le
    \frac{\beta}{2}\|\phi^*(t)-a\|^2.
\]
Distance between oracle actions then gives a controlled approximation to the
true task loss. It yields the corresponding rate-distortion bounds and
approximation guarantee, but it does not make the oracle action an exact
reduced source.

\paragraph{What the comparison adds.}
The exact and smooth interior results make the oracle action look like a
natural reduced source. The constrained case shows why choosing that source is
a real problem rather than a change of terminology. Two source states can have
the same optimal action while making other feasible actions cost different
amounts. The answer is not to keep the whole raw source. It is to find the
smaller source that still preserves those differences.

\paragraph{Quadratic losses on affine action sets.}
The correction yields a more general source rule. Let the feasible
action set $\mathcal A\subseteq\mathbb R^n$ have affine hull
\[
    \operatorname{aff}(\mathcal A)=a_0+V,
\]
where $V$ is the linear subspace spanned by feasible action differences, and
assume $\mathcal A$ is nonempty and that $\inf_{b\in\mathcal A}f(t,b)$
is finite for every source state under consideration, with
\[
    f(t,a)=\|t+a\|^2,
    \qquad
    g_t(a)=f(t,a)-\inf_{b\in\mathcal A}f(t,b).
\]

\begin{theorem}[Affine-action quadratic regret quotient]
\label{thm:affine-quadratic}
For any two source states $t,t'$,
\[
    g_t(\cdot)=g_{t'}(\cdot)
    \quad\Longleftrightarrow\quad
    t'-t\in V^\perp.
\]
Consequently the orthogonal projection
\[
    \boxed{\Pi_V t}
\]
is a canonical representative of the regret-profile quotient.
\end{theorem}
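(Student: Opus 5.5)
Fix a base point $a_0\in\mathcal A$. Every $a\in\mathcal A$ then has the form $a=a_0+v$ with $v\in V$. The plan is to expand the regret difference directly and show it is an affine function of $a$ whose linear part is pairing with $t'-t$. Writing $\delta:=t'-t$, we have
\[
    f(t',a)-f(t,a)=\|t+\delta+a\|^2-\|t+a\|^2=2\langle \delta,a\rangle+2\langle\delta,t\rangle+\|\delta\|^2 ,
\]
so
\[
    g_{t'}(a)-g_t(a)=2\langle\delta,a\rangle+c(t,t'),
\]
where $c(t,t')$ gathers $2\langle\delta,t\rangle+\|\delta\|^2$ and the difference of the two finite infima. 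The key point is that $c$ does not depend on $a$. Both infima are finite by assumption, so $c$ is a well-defined real number.

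\emph{($\Leftarrow$)} Suppose $\delta\in V^\perp$. Then $\langle\delta,a\rangle=\langle\delta,a_0\rangle$ for every feasible $a$, so $f(t',\cdot)-f(t,\cdot)$ equals a constant $\kappa$ on $\mathcal A$. Taking the infimum over $\mathcal A$ shifts that infimum by the same $\kappa$, so the two regret functions coincide. This direction uses only that the difference of losses is constant on $\mathcal A$; no attainment of the infimum is needed.

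\emph{($\Rightarrow$)} Suppose $g_t=g_{t'}$ on $\mathcal A$. Then $2\langle\delta,a\rangle+c=0$ for all $a\in\mathcal A$. Subtracting this identity at two feasible actions gives $\langle\delta,a-b\rangle=0$ for all $a,b\in\mathcal A$. Since $V$ is by definition spanned by these differences, linearity gives $\delta\perp V$.

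The step I expect to need the most care is this converse. One has to check that the identity holds only on $\mathcal A$, not on all of $\operatorname{aff}(\mathcal A)$, and that this still suffices. It does, because orthogonality to a spanning set of $V$ implies orthogonality to $V$. The degenerate case where $\mathcal A$ is a single point should also be mentioned: then $V=\{0\}$, $V^\perp=\mathbb R^n$, and every regret is identically zero, consistent with the claim.

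For the consequence, note that $t'-t\in V^\perp$ holds if and only if $\Pi_V t=\Pi_V t'$. Hence the map $t\mapsto\Pi_V t$ is constant exactly on the regret-equivalence classes. It is therefore a relabelling of $G_{\mathcal C}$, which is regret-complete by Proposition~\ref{prop:coarsest-regret} and in fact equal to the coarsest such quotient. This makes $\Pi_V t$ a canonical representative of the regret-profile quotient.
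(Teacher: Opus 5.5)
Your proof is correct and follows the paper's argument. You write $f(t',a)-f(t,a)$ as an $a$-independent constant plus $2(t'-t)^\top a$, use the constancy of $(t'-t)^\top a$ on $a_0+V$ for the reverse direction, and for the forward direction take the difference of the identity at two feasible actions and use that such differences span $V$. One small correction: $G_{\mathcal C}$ is regret-complete by construction (via $d_G$), and Proposition~\ref{prop:coarsest-regret} is what supplies its coarseness, not its regret-completeness.
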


\begin{proof}
Let $d=t'-t$. For any feasible $a$,
\[
    f(t',a)-f(t,a)
    =\|t'\|^2-\|t\|^2+2d^\top a.
\]
If $d\in V^\perp$, then $d^\top a$ is constant on the affine set
$a_0+V$, so every feasible action value shifts by the same amount and the
regret profile is unchanged. Conversely, if the regret profiles agree, then
$f(t',a)-f(t,a)$ is constant over $a\in\mathcal A$. Hence
$d^\top(a-a')=0$ for every feasible $a,a'$. These differences span $V$, so
$d\in V^\perp$.
\end{proof}

The theorem has a simple interpretation. A component of the source that is
orthogonal to every direction in which feasible actions can differ may change
the absolute objective value, but it changes every feasible action by the same
amount. It cannot change the cost of one action relative to another, so the
consumer does not need it.

\paragraph{Fixed-energy scheduling.}
Fix a positive energy budget $E>0$.
For load states $t\in\mathbb R_{\ge0}^N$ and
\[
    \mathcal A_E=\{a\in\mathbb R_{\ge0}^N:\mathbf1^\top a=E\},
\]
the affine direction is $V=\mathbf1^\perp$. Theorem~\ref{thm:affine-quadratic}
therefore yields
\begin{equation}
    \boxed{
    G_{\mathcal C}(t)
    =Pt
    =t-\bar t\mathbf1,
    \qquad
    P=I-\frac1N\mathbf1\mathbf1^\top.}
    \label{eq:centered-load}
\end{equation}
This is centered load.

The Bayes-optimal action is the water-filling solution
\[
    \phi^*(t)=\operatorname{WF}_E(t)=(\lambda(t)\mathbf1-t)^+,
\]
but the oracle action is strictly too coarse once non-optimal actions are
possible.
For example, with $E=1$,
\[
    t=(0,10),\qquad t'=(0,20),
\]
both have $\phi^*(t)=\phi^*(t')=(1,0)$, yet for the feasible action $a=(0,1)$,
\[
    g_t(a)=20,\qquad g_{t'}(a)=40.
\]
Thus identical Bayes actions do not imply identical regret profiles.

At the same time, the raw load contains a common shift that cannot change
the relative cost of feasible actions. The exact GOQ assignment rule for fixed
representatives therefore depends only on centered load because
\[
    2t^\top a_k+\|a_k\|^2
    =
    2(Pt)^\top a_k+\|a_k\|^2+2\bar tE,
\]
and the last term is common to every $k$. The cell update can also be
computed from centered load alone. If
\[
    \mu_k=\E[T\mid T\in\mathcal C_k],
\]
then
\[
    \mu_k=\E[PT\mid\mathcal C_k]
          +\E[\bar T\mid\mathcal C_k]\mathbf1,
\]
and water-filling is invariant to common shifts, so
\[
    \operatorname{WF}_E(\mu_k)
    =
    \operatorname{WF}_E\!\left(\E[PT\mid\mathcal C_k]\right).
\]
Hence both assignment and update are well defined on $PT$.

On the natural load state space, the constrained example therefore has the form
\[
    \boxed{
    T\text{ contains a null common mode}
    \quad\longrightarrow\quad
    T-\bar T\mathbf1\text{ is exact}
    \quad\longrightarrow\quad
    \phi^*(T)\text{ can be too coarse}.}
\]
Failure of the obvious task representation does not imply that the producer
source is irreducible. It can mean that the task induces a different source.

\begin{table}[!hb]
\centering
\small
\begin{tabularx}{\textwidth}{p{0.16\textwidth}XXX}
\toprule
 & IB & Semantic/RD & GOQ\\
\midrule
Producer source
& $T$
& demand/state $T$
& task/load state $T$\\

Declared consumer
& predictor of $C$
& allocation/semantic consumer
& action consumer\\

Reduction status
& exact reduction
& architecture-conditioned generally
& exact, approximate, or corrected reduction\\

Key object
& sufficient $\phi(T)$
& designed $\phi^*(T)$ with induced loss
& oracle action or projected load\\

Boundary
& insufficient statistic loses relevance
& second contract / architecture restriction
& active constraints change the right reduced source\\
\bottomrule
\end{tabularx}
\caption{Three separately developed reductions viewed through the
Task-Sufficient Contraction framework.}
\label{tab:audit}
\end{table}

\section{Machine information interfaces}

The same idea has a direct systems interpretation. A producer may keep a rich
internal representation $H_P$, while the receiving machine uses a different
internal representation $H_C$. An interface between them does not necessarily
need to expose either internal representation.

One approach is representation alignment. A sender and receiver may exchange
embeddings, hidden states, or caches and learn a transformation that makes
the sender's internal state usable by the receiver. Recent latent-communication
work makes this approach increasingly explicit, including cross-architecture
alignment as an open design problem \citep{LiuLatent2026}. This can be
valuable when the receiving system needs broad access to the sender's latent
content.

Task-Sufficient Contraction suggests another possibility. If the receiver has a declared task, an external interface can be built around a certified reduced source rather than either party's complete internal representation. Two producers with different internals may then be able to serve the same receiver without first making those internals mutually intelligible.

This distinction is particularly natural for agent protocols in which the
remote system is intentionally opaque. A2A exposes capabilities and exchanged
artifacts while allowing internal state, memory, tools, and proprietary logic
to remain hidden \citep{A2A2026}. In that setting the practical question is
not how to reproduce the remote agent internally. It is what information must
cross the interface for the receiving task.

There need not be one universal interface for every receiver. A broad task
may still require a rich representation, while a narrow task may require much
less. One way to describe the alternative is \emph{consequence alignment}:
producers need not represent the world in the same way if they expose the
distinctions that can change what the receiver should do.

This does not make internal representations irrelevant, and it does not imply
that the reduced interface is cheap to compute. A rich latent representation
may still be the best internal way to produce the information the receiver
needs.

The systems implication is narrower:

\begin{quote}
Representation alignment can be sufficient for interoperability, but it need
not be necessary when the receiving task admits a shared task-sufficient
interface.
\end{quote}

That proposition is testable. A later systems study can compare a
representation-alignment pipeline with an interface constructed directly from
a verified reduced source, measuring task performance, transmitted bits
or tokens, latency, compute, memory and energy. The present paper stops before
making those empirical claims.

\section{Scope and boundaries}

Every reduction in this paper has a stated scope. Exactness never means that the discarded information is useless for every possible future task.

If one interface must serve consumers
\[
    \mathcal C_1,\ldots,\mathcal C_m,
\]
with regret sources
\[
    G_1,\ldots,G_m,
\]
then the joint source
\[
    G_{1:m}:=(G_1,\ldots,G_m)
\]
preserves the regret of every available action for all of them.

\begin{proposition}[Static multi-consumer refinement]
Adding a declared consumer can only refine the joint regret source:
\[
    G_{1:m}
    =
    h(G_{1:m+1})
\]
for the projection \(h\). In finite systems,
\[
    H(G_{1:m})
    \le
    H(G_{1:m+1}).
\]
\end{proposition}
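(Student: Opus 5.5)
The plan is to treat the statement as two claims: an exact factorisation and an entropy inequality deduced from it.

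\textbf{Factorisation.} Each $G_i=[X]_{\sim_{G_i}}$ is the regret-profile label for consumer $\mathcal C_i$. Define the joint label pointwise:
\[
    G_{1:m+1}(x)=(G_1(x),\ldots,G_{m+1}(x)).
\]
Let $h$ be the coordinate projection that forgets the last component, so $h(g_1,\ldots,g_{m+1})=(g_1,\ldots,g_m)$. Then $h(G_{1:m+1}(x))=G_{1:m}(x)$ holds identically in $x$, which is the required deterministic factorisation. No property of the regret quotients is needed beyond each $G_i$ being a well-defined deterministic function of $X$.

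It is also worth recording the equivalent fibre statement. Two source states are identified by $G_{1:m}$ exactly when they lie in the intersection of the regret-equivalence relations $\sim_{G_1},\ldots,\sim_{G_m}$. Adding a consumer intersects with one more equivalence relation, so each fibre of $G_{1:m+1}$ sits inside a fibre of $G_{1:m}$. This is the same fibre-containment argument used in Proposition~\ref{prop:coarsest-regret}.

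\textbf{Entropy inequality.} In the finite setting every $G_i$ takes finitely many values, so both joint sources are finite random variables. Since $G_{1:m}=h(G_{1:m+1})$, the chain rule gives
\[
    H(G_{1:m+1})=H(G_{1:m})+H(G_{1:m+1}\mid G_{1:m})\ge H(G_{1:m}),
\]
because conditional entropy is nonnegative. Equivalently, this is the standard fact that a deterministic function cannot increase entropy.

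\textbf{Main obstacle.} There is no real technical difficulty. The only point needing care is that the $G_i$ are equivalence-class labels, so "the projection $h$" must act on tuples of labels rather than on the underlying classes. Defining the joint source as the tuple of labels, as the statement already does, resolves this. It may also be worth remarking that $G_{1:m}$ preserves every regret for all $m$ consumers, because each coordinate is regret-complete for its own consumer. By Proposition~\ref{prop:coarsest-regret} applied coordinatewise, it is also the coarsest such joint statistic, although the statement does not require this.
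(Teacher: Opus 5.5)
Your proposal is correct and matches the argument the statement intends; the paper gives no separate proof. Taking $h$ to be the coordinate projection that forgets $G_{m+1}$ gives $G_{1:m}=h(G_{1:m+1})$ identically in $x$, and in the finite setting your chain-rule step, $H(h(G_{1:m+1}))\le H(G_{1:m+1})$, gives the entropy inequality. Your side remark that $G_{1:m}$ is the coarsest statistic preserving all $m$ regret profiles also holds: any such $Q$ is regret-complete for each $\mathcal C_i$, so Proposition~\ref{prop:coarsest-regret} makes every coordinate, and hence the tuple, a function of $Q$.
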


This is a static result. An interface that must serve several declared tasks
may need to keep distinctions that any one of those tasks could ignore. The
present paper does not claim that such an interface can anticipate tasks that
were never part of its specification.

Several other boundaries matter.

\paragraph{Hidden or misspecified criteria.}
The reduction is only as good as the consumer specification. If the declared
actions or loss leave out a consequence that actually matters, the reduced
source may discard information the real problem needs.

\paragraph{Architecture versus consumer.}
A designer may intentionally force a system to work only through a reduced
variable \(Q\), even when the consumer could use more information. The loss in
\eqref{eq:induced-distortion} then describes that restricted system correctly.
It does not show that the consumer itself had no use for the discarded
information.

\paragraph{Approximate reduction.}
A simpler reduced variable may give useful bounds without preserving exact
regret. Smooth GOQ is an example. A bounded approximation should not be
described as exact sufficiency.

\paragraph{Implementation cost.}
Information-theoretic redundancy is not the same as implementation cost.
Removing a distinction from the external task does not prove that the cheapest
implementation should avoid computing or storing it internally.

\paragraph{State-changing action and the recursive boundary.}
The present theory is static. Bisimula\-tion-based state abstraction requires
equivalence to survive state transitions \citep{Ferns2004}.
\citet{Subramanian2022AIS} separately require sufficiency for expected-reward
evaluation and for predicting the compressed state's own evolution in their
information-state formulation. \citet{WalshConsequence2026} studies recurrent
consequence-sensitive compression, including policies that alter later
inferential conditions. The present paper does not impose or solve these
recursive requirements. If acting through an interface changes the learner,
state distribution, criterion, candidate set, action menu, or dependence
structure, a reduction valid for the current problem may no longer be valid
for the next decision. The results here do not compare source requirements
across such changes or define a dynamic update rule for the task-sufficient
source.

\section{Discussion}

The weather example makes sufficiency concrete. A forecast that cannot settle
whether a gust will dislodge a roof tile can still provide everything needed
for a particular wind-turbine decision. Sufficiency belongs to the declared
use of the information, not to the richness of the thing being described.

This becomes especially important when the consumer is another machine. An
agentic economy creates repeated handoffs among models, agents, controllers,
and services that may be opaque to one another. If a receiving machine can be
given a certified task-sufficient interface, interoperability need not require
sharing or aligning complete internal representations at every handoff. The
systems value of a valid reduction therefore grows as machine-to-machine
coordination becomes more common and heterogeneous.

Task-Sufficient Contraction says that the reduction can sometimes be made
before the later optimization begins. First identify the source distinctions
the task can never use. Then solve the later information problem on what
remains. An exact result means that returning to the richer source cannot
improve any of the operating points covered by the theorem.

The three earlier reductions show why the framework needs more than one label.
The Information Bottleneck provides the cleanest exact case. A sufficient
source statistic can be fixed before the rate or Lagrange multiplier is
selected, and the complete optimization family is unchanged. Designed-source
semantic rate-distortion shows that an exact reduced problem can instead be
created by an architectural restriction and an induced loss. Goal-oriented
quantization shows that an obvious task representation can fail without making
the raw source irreducible. Under fixed-energy quadratic scheduling, centered
load is exact even though the optimal water-filled action is too coarse.

The consumer-specific regret construction gives another exact case. It keeps
every distinction that can change the relative cost of an available action and
discards distinctions that cannot. If the same problem is also described by a
criterion $Y$, the relationship is:
\[
    X\longrightarrow S_Y\longrightarrow G_{\mathcal C}
    \longrightarrow B_{\mathcal C}.
\]
The chain separates three questions. What information does $X$ provide about $Y$? Which
of those distinctions can change the cost of an action available to this
consumer? Which distinctions are still needed if only the best
full-information action matters?

For machine interfaces, the implication is that shared internal
representations are not always necessary. A producer may keep a rich internal
state while exposing only a task-sufficient external object. Another producer
with a different internal representation may expose the same external
distinctions. This does not imply that every task admits a small interface or
that a smaller interface is cheaper to compute. It says only that the external
information requirement should be derived from the receiving problem rather
than assumed to be the producer's full representation.

The static boundary matters. A reduction that is valid now need not remain
valid after the receiver acts. Action may change the learner, state
distribution, candidate pool, action menu, criterion, or dependence structure.
The next question is therefore not another static contraction at one fixed
state. It is how the information required by the task changes across states,
and whether the representation should become finer or coarser as the
action-consequence problem changes. This paper does not answer that recursive
question.

\bibliographystyle{plainnat}
\bibliography{task_sufficient_contraction_v2_10_final_fixes_2026-10-01}

\end{document}